\newtheorem{propi}{Proposition}
\DeclarePairedDelimiter\abs{\lvert}{\rvert}
\let\oldabs\abs
\def\abs{\@ifstar{\oldabs}{\oldabs*}}
\begin{document}
	\title{Utilizing 5G NR SSB Blocks for Passive Detection and Localization of Low-Altitude Drones}

	\author{\IEEEauthorblockN{Palatip Jopanya, Diana P. M. Osorio} \IEEEauthorblockA{ {Department of Electrical Engineering,} {Link\"oping University}, Sweden }  \IEEEauthorblockA{E-mail: \{palatip.jopanya, diana.moya.osorio\}@liu.se} }
	
	\maketitle
	
	\begin{abstract}	
	With the exponential growth of the unmanned aerial vehicle (UAV) industry and a broad range of applications expected to appear in the coming years, the employment of traditional radar systems is becoming increasingly cumbersome for UAV supervision. Motivated by this emerging challenge, this paper investigates the feasibility of employing integrated sensing and communication (ISAC) systems implemented over current and future wireless networks to perform this task. We propose a sensing mechanism based on the synchronization signal block (SSB) in the fifth-generation (5G) standard that performs sensing in a passive bistatic setting. By assuming planar arrays at the sensing nodes and according to the 5G standard, we consider that the SSB signal is sent using sweeping beams that are multiplexed in time, with some of them pointing toward a surveillance region where low-altitude drones can be flying. The Cr\'{a}mer-Rao Bound (CRB) is derived as the theoretical bound for range and velocity estimation. {The probabilities of detection and false alarm are investigated using the peak-to-average factor as a threshold under a power-saving scheme.}
	\end{abstract}
	\begin{IEEEkeywords}
		Cr\'{a}mer-Rao Bound, integrated sensing and communications, {SSB signals}
       , UAV intruder detection and localization.
	\end{IEEEkeywords}

\section{Introduction}
\label{Introduction}
The road toward the standardization of integrated sensing and communication (ISAC) has already started with the 3rd Generation Partnership Project (3GPP) presenting a feasibility study with $32$ use cases and potential requirements in Release 19~\cite{TR}. Among others, it is expected that fifth-generation (5G) radio signals can be used to sense the presence or proximity of unmanned aerial vehicles (UAVs) illegally flying in restricted areas (e.g. airports, and military bases), which comes as a challenging while critical use case. 

In ISAC systems, where the same hardware, waveform, and time-frequency resources are reused for both functionalities, a more efficient use of resources can be achieved, thus avoiding interferences that can appear between communication and radar systems operating simultaneously.

One way to implement ISAC systems is in a passive sensing setting, where communication signals are used as illuminators of opportunity for sensing. Although the abundant downlink data symbols may appear suitable as illuminators of opportunity, they are not ideal candidates for a few reasons. 

First, the transmitting beams are focused at random terminal locations by employing i.e., maximum-ratio transmission (MRT) or zero-forcing (ZF) precoders, leaving gaps of weak signal in the surveillance area. 
Furthermore, the dependency on traffic demands limits their availability.

Pilot signals have shown advantageous passive detection performance. For instance, the work in~\cite{9921271} investigates the positioning reference signal (PRS) for radar sensing. Moreover, the synchronization signal (SSB) of 5G NR is the only periodic block of symbols that is well-defined as a fixed-size block in the time-frequency grid. SSB is used for cell search when a user equipment (UE) first enters the coverage area of the system, as it is periodically swept across the entire coverage area of the base station~\cite{5gnr}. Cell search is carried out continuously by UEs moving between cells, when the UE is in connected or idle mode~\cite{5gnr}. Thus, all these characteristics make this block highly appealing for passive sensing in ISAC networks and have been investigated in~\cite{10083170,10200933}. 

Inspired by the promising insights reported in the above works, this paper further proposes a mechanism based on SSB signals for performing passive sensing of UAVs in a surveillance region, which can be also used, for instance, to detect intruders in non-flying zones. We consider a bistatic implementation employing uniform planar arrays (UPAs) at both transmitter and sensing receiver base stations. Our mechanism considers that a number of beams during the SSB procedure are directed toward the surveillance area at higher altitudes, while the rest are directed toward UEs at lower altitudes. The transmit signals are {SSB signals}, precoded into a sweeping beams that sweep across the entire coverage, in a time-multiplexed fashion. For this system, we derive the Cram\'{e}r-Rao bound (CRB) for range and velocity estimation and compare it with the two-dimensional fast Fourier transform (2D-FFT) estimation method. For the 2D-FFT, we consider a receiver beam aggregation criterion based on a peak-to-average factor (PAF) of the range-velocity profile for the estimation. Additionally, we provide insights on the impact of system parameters and evaluate the impact of the number of beams directed toward the surveillance region over the detection performance.

\section{System model}

\begin{figure*}[htbp]
    \centering
    \hspace{0cm}
    \vspace{0cm} 
    \begin{subfigure}[b]{0.2\textwidth}
        \centering
        \includegraphics[width=2.3\textwidth]{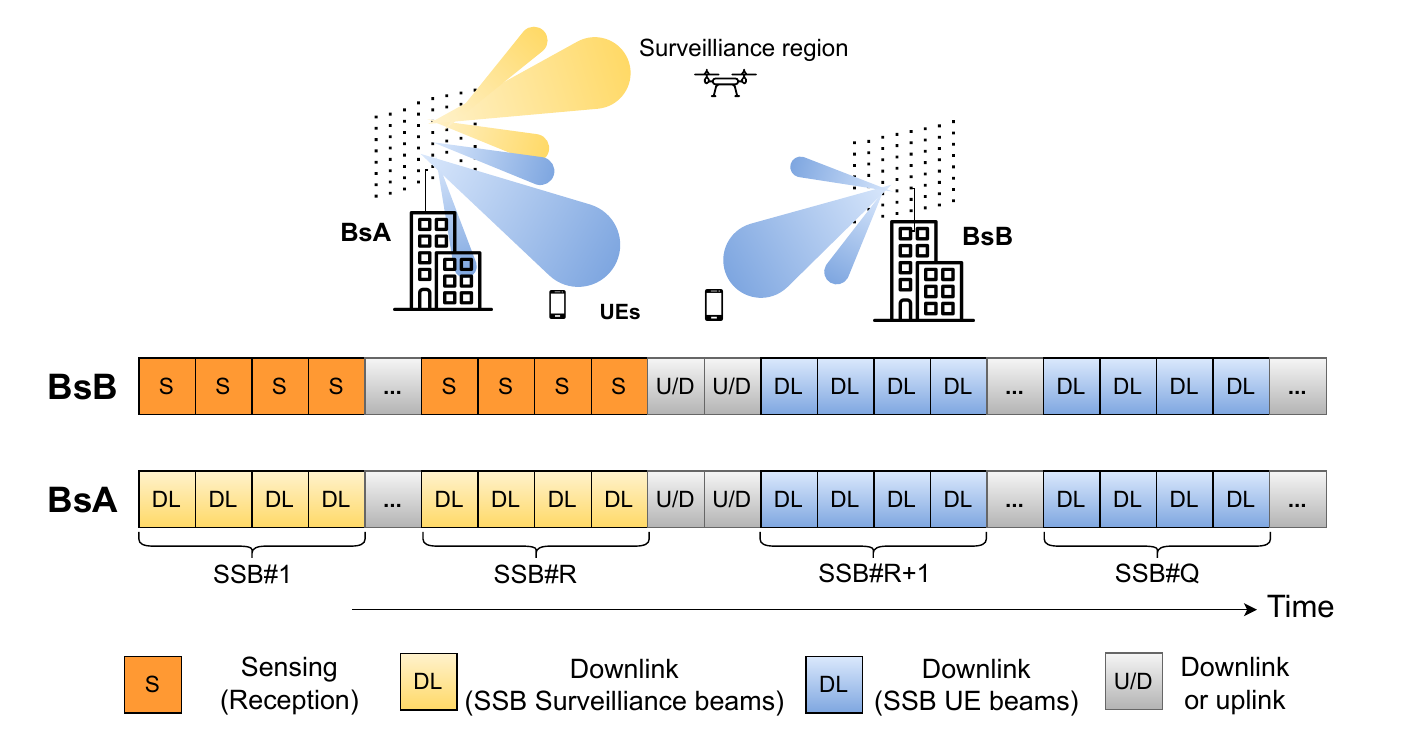}
        \vspace{0.5cm}
        \caption{}
        \label{fig:TDD_workflow}
    \end{subfigure}
    \hspace{4.5cm}
    \begin{subfigure}[b]{0.5\textwidth}
        \centering
        \includegraphics[trim=0cm 0cm 0cm 0cm, clip, width=1\textwidth]{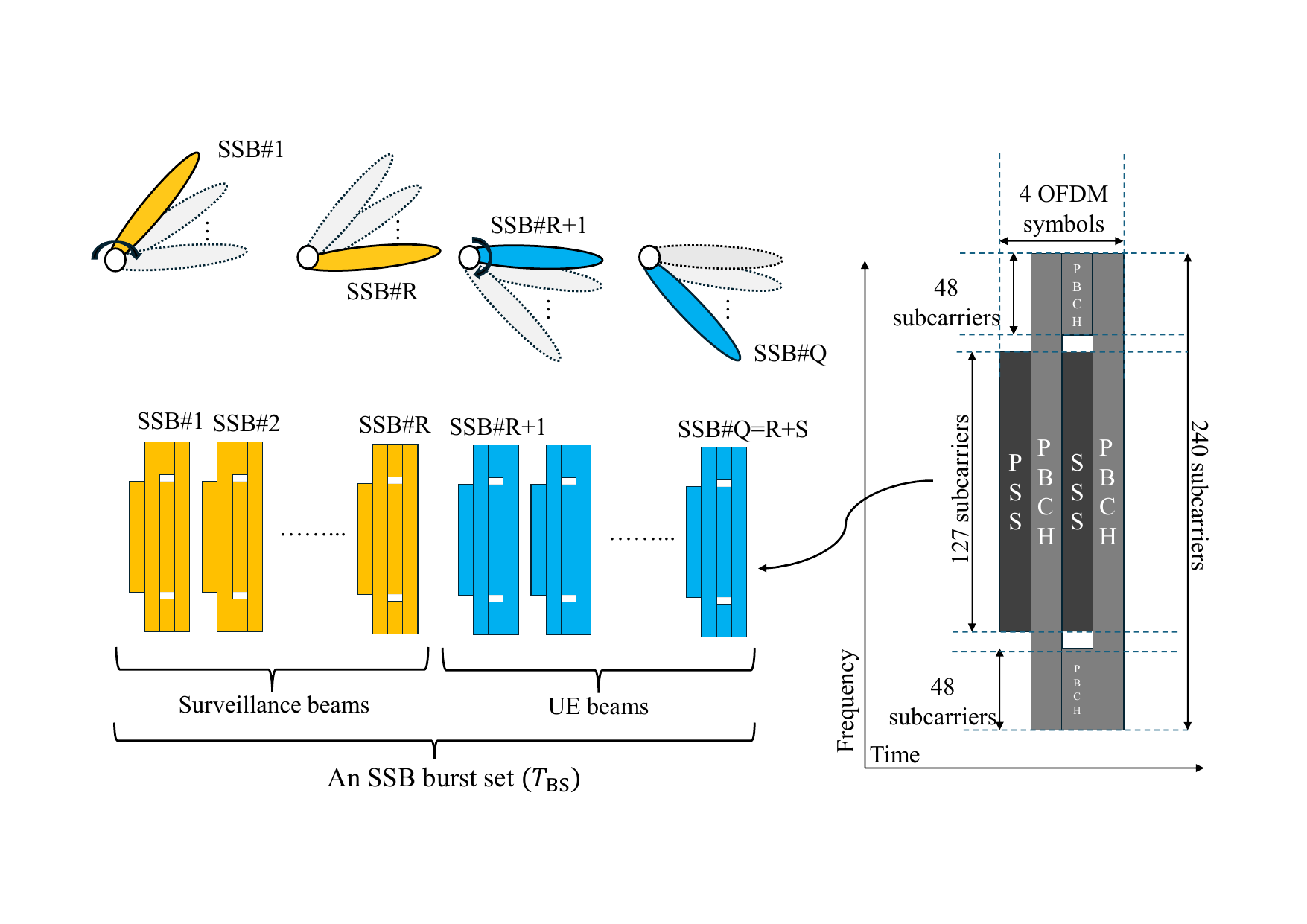}
        \vspace{-0.7cm} 
        \caption{}
        \label{fig:SSB_per}
    \end{subfigure}%
    \caption{a) A bistatic pair and TDD flow with sensing reception, b) An SSB scheduling scheme.}
\end{figure*}
We consider a pair of multiple-input multiple-output (MIMO) ISAC base stations (BSs) that jointly form a bistatic sensing pair consisting of base station A (BsA), which functions as the sensing transmitter, and base station B (BsB), which functions as the sensing receiver, while both BSs are capable of transmitting and receiving communication symbols. We assume that BsA and BsB are coordinated and synchronized in phase and frequency. BsA and BsB are equipped with a UPA with $M_\text{H}$ antennas per row and $M_\text{V}$ antennas per column with $M_\text{V} $$=$$ M_\text{H}$ that add up to $M$$=$$M_\text{V} \times M_\text{H}$ total antennas. The system operates in {a scheme requires an adaptation on the time-division duplexing (TDD) workflow}, where BsB switches to reception mode when BsA transmits sensing symbols as illustrated in Fig. \ref{fig:TDD_workflow}. The BSs serve user equipments (UEs) via data symbols and synchronize with UEs via SSB symbols. We propose that the BsB exploits the echoes from the beam sweeping of the SSB beams directed toward the surveillance area, referred to as surveillance beams. The illustration of the bistatic setup in the top part of Fig. \ref{fig:TDD_workflow} shows the beam sweep of surveillance beams in yellow and the SSBs of UEs in blue. {Note that the pattern can be formed by alternating the receiving base stations when considering only two, thus balancing the overhead.} 



\subsection{Synchronization Signal Block (SSB)}
In the 5G standard, an SSB is a periodic signal that is structured as a block of orthogonal frequency-division multiplexing (OFDM) symbols consisting of PSS (Primary Synchronization Signal), SSS (Secondary Synchronization Signal), and Physical Broadcast Channel (PBCH) symbols, as shown in Fig. \ref{fig:SSB_per}. Each block is transmitted consecutively in time with a few symbols as a gap in different directions and time. A set of {SSB signals}, called an SSB burst set, covers the entire surveillance area and UEs coverage. This beam sweeping process is confined within SSB burst set period ($T_{\text{BS}}$) and will be repeated periodically in every SSB periodicity ($T_\text{SP}$), as shown in Fig. \ref{fig:SSB_per} \cite{5gnr}.
\subsection{Transmit signal}
An OFDM block with $N$ subcarriers and $L$ symbols is considered. The baseband signal of the $n$th subcarrier and $l$th symbol can be written as~\cite{multicarrier} ${x}_{n,l}(t) $$=$$ \sqrt{\rho}x_{n,l} e^{j2\pi n f_\Delta t}\text{rect}\left( \frac{t - lT_{s}}{T_{s}} \right)$, where $n{\in}[1,2,...,N]$ is the subcarrier index, $l {\in} [1,2,...,L]$ is the symbol index, $x_{n,l}$ is the complex data corresponding to the $n$th subcarrier and $l$th symbol, $\rho$ is the transmitted power per symbol, $f_\Delta$ is subcarrier spacing, $T_{s}={1}/{f_\Delta}$ is symbol duration, $T= T_{s}+ T_c$ is the total symbol duration, and $T_c$ is the cyclic prefix duration. We assume that the cyclic prefix is larger than the delay spread of the targets. The antenna steering vector of $r$th beam, $\mathbf{a} (\theta_r,\phi_r) \in \mathbb{C}^{M}$, is a half-wavelength spaced UPA antenna response, where $\theta_r$ is the azimuth and $\phi_r$ is the elevation of $r$th beam. This is defined as~\cite{björnson2024introduction} $\mathbf{a} (\theta_r,\phi_r) = \mathbf{a}_{M_\text{V}}(\phi_r,0)\otimes \mathbf{a}_{M_\text{H}}(\theta_r,\phi_r)$, with $\mathbf{a}_{M_\text{V}} (\phi_r,0) = [1, e^{-j\pi \sin(\phi_r)},..., e^{-j \pi  ({M}_V-1)\sin(\phi_r)}]^T$, 
$\mathbf{a}_{M_\text{H}} (\theta_r,\phi_r) = [1, e^{-j \pi \sin(\theta_r)\cos(\phi_r)},..., e^{-j \pi ({M}_H-1)\sin(\theta_r)\cos(\phi_r)}]^T$, where $\mathbf{a}_{M_\text{H}} (\theta_r,\phi_r) \in \mathbb{C}^{M_H}$ and $\mathbf{a}_{M_\text{V}} (\phi_r,0) \in \mathbb{C}^{M_V}$.
{The SSB signals are represented using sweeping beams, and each beam is associated to a unique angular pair $(\theta_i,\phi_j)$ in the set of antenna responses as}
\begin{equation}
\label{eqn:setGoB}
\{ \mathbf{a} (\theta_i,\phi_j) \} : \: \theta_i =  \arcsin{\left( \frac{2q}{\sqrt{M}} \right)},\phi_j = \arcsin{\left( \frac{2q}{\sqrt{M}} \right)},    
\end{equation}
for $q $$\in$$ [ 0,\pm1,\pm2,.., \lfloor {\sqrt{M}}/{2}$$-$$1  \rfloor ]$, where each antenna response corresponds to each beam. Here we consider $M$$=$$100$, which results in $9$ {angles} in degrees as $\theta_i, \phi_j $$\in$$ \{0,\pm11.5,\pm23.6,\pm36.9,\pm53.1\}$, yielding a total of $Q$$=$$81$ antenna responses in a set. Define the normalized precoder matrix $\mathbf{F} \in \mathbb{C}^{M \times R}$ as
\begin{equation}
\label{eqn:fmat}
\mathbf{F} = \frac{1}{\sqrt{M}} \left[\mathbf{a}^\ast (\theta_1,\phi_1), \mathbf{a}^\ast (\theta_2,\phi_2),..., \mathbf{a}^\ast (\theta_R,\phi_R)\right] ,
\end{equation}
\noindent
where each column represents a unique, conjugate-normalized antenna response in a given set in (\ref{eqn:setGoB}). Note that we consider only a total of $R$$=$$45$ beams in the surveillance area, with an elevation greater than or equal to zero. The passband transmitted signal can be written as $\text{Re} \left\{ {x}_{n,l,r}(t)  \mathbf{f}_r e^{j2\pi f_c t} \right\},$ where $f_c$ is the carrier frequency, $\mathbf{f}_r$ is the $r$th column in matrix $\mathbf{F}$ in (\ref{eqn:fmat}), and ${x}_{n,l,r}(t)$ is a transmit signal of $r$th beam. Define $\mathbf{\Phi} \in \mathbb{C}^{R\times R}$, a diagonal matrix with sequence of unit power symbols in the diagonal, $\mathbf{\Phi}\mathbf{\Phi}^H = \mathbf{I}_R$. This yields $\text{Re} \left\{\mathbf{F} \mathbf{\Phi} e^{j2\pi f_c t} \right\}$, a sequence of transmitting symbols in a sweeping beams. {This paper considers only the SSB signals used for surveillance; the SSBs intended for UEs are not the main focus. The overhead from sensing reception in the TDD frame is approximately $(4$$\cdot$$100$$\cdot$$R$$\cdot$$T)/T_\text{SP}$ percent of the total duration $T_\text{SP}$, and it applies only to the subcarrier of the SSB signal.}
\subsection{Received signal}
We consider only the echoes from surveillance beams, where the clutter from non-target objects in both line-of-sight (LoS) and non-line-of-sight (NLoS) conditions is assumed to be removed by the system. A study in \cite{10840933} demonstrates an OFDM-based clutter rejection method. The channel consists of the LoS direct-link between BsA and BsB and the echoes from the targets. The received signal, $\mathbf{y}_{n,l,r} \in \mathbb{C}^{M}$, of $n$th subcarrier, $l$th symbol, and $r$th beam with a delay shift $\tau$ and a Doppler effect $f_d$ in discrete time is given by \cite{7952790},\cite{multicarrier} 
\begin{equation}
\label{eqn:receivedSig0}
\mathbf{y}_{n,l,r} = x_{n,l,r}(lT_s-\tau) \mathbf{Hf}_r e^{j2\pi f_dlT_s} + \mathbf{w_{n,l}} ,
\end{equation}
where $\mathbf{H} \in \mathbb{C}^{M\times M}$ is the channel, and $\mathbf{w}_{n,l} \sim \mathcal{CN}(0,\sigma_n^2 \mathbf{I})$ is additive white Gaussian noise (AWGN). The transmit symbols are removed at the BsB assuming that
they are known by coordination. Thus, $\mathbf{Y}_{nl}\mathbf{\Phi}^H = \sqrt{\rho}\mathbf{H F}\mathbf{\Phi \Phi}^H + \mathbf{W}' = \sqrt{\rho}\mathbf{H} \mathbf{F} + \mathbf{W}'$, where $\mathbf{Y}_{n,l} \in \mathbb{C}^{M \times R}$ is a matrix of the received sequence, and $\mathbf{W}' = \mathbf{W\Phi}^H$. The received signal, $\mathbf{Y}'_{n,l} = \mathbf{Y}_{nl}\mathbf{\Phi}^H , \; \mathbf{Y}'_{n,l} \in \mathbb{C}^{M \times R}$, can be derived to account for the direct link and the echoes as follows: 
\begin{equation}
\begin{split}
\label{eqn:receivedSig1}
\mathbf{Y}'_{n,l} = \underbrace{\sqrt{\rho\beta_0} \mathbf{H}_0 \mathbf{F} e^{-j2\pi (f_c + nf_\Delta ) \tau_0}}_\text{Direct link}
\\
+ \underbrace{ \sqrt{\rho}  \sum_{k=1}^{K} \alpha_k \sqrt{\beta_k} 
\underbrace{\mathbf{a}
(\theta_{a,k},\phi_{a,k})}_\text{Arrival}
\underbrace{\mathbf{a}^T(\theta_{d,k},\phi_{d,k})}_\text{Departure} \mathbf{F} }_\text{echoes}  
\\
\underbrace{e^{-j2\pi (f_c + nf_\Delta ) \tau_k}}_\text{Delay shift} 
\underbrace{e^{j2\pi f_{d,k} lT_{s}}}_\text{Doppler effect} 
  + \mathbf{W}'_{n,l} ,
\end{split}
\end{equation}
\noindent
where $K$ is the number of targets, $\beta_0$ is the channel gain from direct link, $\mathbf{H}_0$ is the direct link channel, $\theta_{a,k},\phi_{a,k}$ are arrival azimuth and arrival elevation of $k$th target, respectively, corresponding to BsB, $\theta_{d,k},\phi_{d,k}$ are departure azimuth and departure elevation of $k$th target, respectively, corresponding to BsA, $\tau_k$ is the total delay of the echoes of target $k$th, $f_{d,k}$ is the Doppler shift effect from target $k$th, and $\alpha_k \sim \mathcal{CN}(0,1)$ is the randomness model of a radar cross section (RCS) of the $k$th target that follows a Swerling 1 model \cite{li2025detectingunauthorizeddronescellfree}, where the realization fluctuates slowly and remains fixed within the OFDM block length $L$. $\beta_k$ is the large-scale fading of the echoes, BsA-target $k$th-BsB, that is
    \begin{equation}
    \label{eqn:betak}
    \beta_k = \frac{\lambda_c^2 \sigma_{\text{rcs},k}}{(4\pi)^3d^2_{tx,k}d^2_{k,rx}},
    \end{equation}
 \noindent
where $\lambda_c$ is a wavelength of the carrier frequency, $d_{tx,k}$ is the distance between BsA and target $k$th, $d_{k,rx}$ is the distance between target $k$th and BsB, and $\sigma_{rcs}$ is the RCS in dBsm. {Note that the sensing is done without receive beamforming.} We assume that the first term (direct link) in (\ref{eqn:receivedSig1}) is known and can be removed at BsB since both BsA and BsB are stationary. The received signal, $\textbf{Y}'_{n,l}$, after direct link cancellation, down-conversion and substituting the Doppler shift with velocity ($f_{d,k} ={2v_k}/{\lambda_c}$) and the delay with the bistatic range ($\tau = {d_k}/{c}$) is
\begin{equation}
\begin{split}
\label{eqn:receivedSig2}
\mathbf{Y}'_{n,l} = 
\sqrt{\rho}  \sum_{k=1}^{K} \alpha_k \sqrt{\beta_k} 
\mathbf{a}
(\theta_{a,k},\phi_{a,k})
\mathbf{a}^T(\theta_{d,k},\phi_{d,k}) \mathbf{F} 
\\
e^{-j2\pi nf_\Delta \frac{d_k}{c}} 
e^{j4\pi \frac{v_k}{\lambda_c} lT_{s}}
 + \mathbf{W}'_{n,l},
\end{split}
\end{equation}

\noindent
where $c$ is the speed of light, $d_k$ is the bistatic range of target $k$th, $v_k$ is the relative radial velocity of target $k$th with respect to the BsB. The unambiguous range is a detectable distance given by the propagated distance of one symbol duration as $d_{\text{u}} \leq {c T_{s}}$. The unambiguous velocity is the range of maximum and minimum relative radial velocities, which can be defined as $\vert v_{\text{u}} \vert \leq {\lambda_c f_\Delta}/{2 }$.

\section{Parameters estimation}
\subsection{2D-FFT}
The two-dimensional fast Fourier transform (2D-FFT) is a non-parametric method in spectral estimation \cite{Braun2014OFDMRA}. We use 2D-FFT for estimating the range and velocity and to compare it with the CRB. This method is used because 2D-FFT is simple and does not require the number of targets a priori. To implement 2D-FFT in this system, we first form a received frame matrix, $\mathbf{Z}_r \in \mathbb{C}^{N \times L}$, of SSB beam $r$th disregarding the dimension of $M$ (multiple antennas). Given that the received signal vector $\mathbf{y}'_{n,l,r} = [y_{n,l,r,1},y_{n,l,,r,2},...,y_{n,l,r,M}]^T$ is the $r$th column of $\mathbf{Y}'_{n,l}$ in (\ref{eqn:receivedSig2}). $\mathbf{Z}_r$ is formed by arranging the first element of each $\mathbf{y}'_{n,l,r}$ as
\begin{equation}
\label{eqn:Zmat}
\mathbf{Z}_r =
\begin{bmatrix}
 y_{1,1,r,1} & y_{1,2,r,1} & ... & y_{1,L,r,1} \\
 y_{2,1,r,1} & y_{2,2,r,1} & ... & y_{2,L,r,1} \\
 \vdots & \vdots & \ddots & \vdots \\
 y_{N,1,r,1} & y_{N,2,r,1} & ... & y_{N,L,r,1} \\
\end{bmatrix}.
\end{equation}
\noindent
Define a range-velocity profile as $\mathbf{Z}'_r \in \mathbb{R}^{N' \times L'}$, the $n'$th row and $l'$th column in matrix $\mathbf{Z}'_r$ is 
\begin{equation}
    \label{eqn:perio}
    \left[ \mathbf{Z}'_{r} \right] _{n',l'} = \abs{ \sum_{n=0}^{N'-1} \left( \sum_{l=0}^{L'-1}\left[ \mathbf{Z}_{r} \right] _{n,l}e^{-j2\pi l'\frac{l}{L'}} \right) e^{j2\pi n' \frac{n}{N'}} },
\end{equation}
where $\left[ \mathbf{Z}_{r} \right]_{n,l}$ is the $n$th row and $l$th column in matrix $\mathbf{Z}_r$. The range-velocity profile in (\ref{eqn:perio}) can be seen as row-wise $L'$-point FFT and column-wise $N'$-point IFFT of $\mathbf{Z}_r$ with padding zeros, $L'$$>$$L, \; N'$$>$$N$.
\subsection{Received beam aggregation}
{We consider the received beam aggregation with a non-coherent integration of beams approach, where the power of multiple received beams are added up to improve the performance}. {As BsB receives a total of $R$ OFDM blocks where only a few contain target echoes, we propose an aggregation method based on a threshold that we call PAF, which is defined as} $p_r $$=$$ 
{\max_{n',l'} \mathbf{Z}'_{r} }/({\mathbf{1}^T\mathbf{Z}'_{r} \mathbf{1}})$ and the PAF vector is $\text{PAF} $$=$$ [p_1, p_2,...p_R]^T$. Define a set $\eta$ that satisfies $\eta = \{i\}, \;i\  \text{s.t.} \  p_i > \gamma_a, \; i \in \{1,2,...,R \}$, where $\gamma_a$ is a selected threshold. A larger $\gamma_a$ enhances the separation between the target and the noise. Define $\mathbf{Z'}$, the aggregated profile, as $\mathbf{Z'} = \frac{1}{\abs{\eta}} \sum_{j \in \eta} \mathbf{Z}'_j$, where $\abs{\eta}$ is the size of the set $\eta$. For a single target, the bistatic range estimate ($\hat{d}$) and relative velocity estimate ($\hat{v}$) can be found as
    $\hat{d}={\hat{n}c T_s}/{N'} ,\ \hat{v}={\hat{l}f_\Delta \lambda_c}/({2 L'}),$
where $\hat{n}$ and $\hat{l}$ are the row and column indices that locate the peak in $\mathbf{Z}'$ as
$(\hat{n},\hat{l}) = \operatorname*{argmax}_{{n'},{l'}} \mathbf{Z}'$.
\subsection{Target detection}
The detection decision is based on the threshold, $\gamma$.  The detector is given as $p \underset{\mathcal{H}_0}{\overset{\mathcal{H}_1}{\gtrless}}\gamma$, where, the detector indicates the presence of a target when $p>\gamma$, $\mathcal{H}_1$, and the absence of a target if $p<\gamma$, $\mathcal{H}_0$. Here, $p$ is defined as $p= \max_r \text{PAF}$.
\section{Cram\'{e}r-Rao Bound}
For simplicity, we consider the CRB of the random variables, range and velocity, denoted as $\Theta $$=$$ [\Theta_i,\Theta_j]^T$$=$$ [d,v]^T$, in a single target case ($K$$=$$1$). We evaluate this for a given signal-to-noise ratio at BsB, $\text{SNR}_r={\rho \beta g^2}/({\sigma_n^2 M})$, assuming a given RCS and fixed flight path. The received signal at the $n$th row, $l$th column of (\ref{eqn:Zmat}), disregarding the subscript $r,1$ and $k$, can be rewritten as  
\begin{equation}
\label{eqn:receivedSig4}
{y}_{n,l} = \sqrt{\frac{\rho}{M} \beta} \alpha g   e^{-j2\pi nf_\Delta \frac{d}{c}} e^{j4 \pi \frac{v}{\lambda_c} lT_{s}} + {w}_{n,l}, \: {y}_{n,l} \in \mathbb{C},
\end{equation}
\noindent
where $g$$=$$\mathbf{a}^T(\theta_d,\phi_d) \mathbf{a}^*(\theta_{r},\phi_{r})$ is the beamforming gain of the $r$th transmitting beam with respect to the target's angle. We consider the case with $M$=$100$ and $Q$=$81$ in (\ref{eqn:setGoB}) and (\ref{eqn:fmat}). The gain $\abs{g}$ is then bounded by $M(\text{dB}) - 3.7 \text{dB} < \abs{g} \leq M(\text{dB}) ,
\label{eqn:g_bound}$ where $-3.7$ dB is from the selected $M$, and the bound holds irrespective of whether the target is located inside the sweeping beams, given that the $r$th beam is directed closest to the target among all $R$ beams. Formulate a vector $\mathbf{y}$$=$$[y_{1,1},y_{2,1},\dots,y_{2,1},y_{2,2},\dots,y_{N,L}]^T, \; \mathbf{y} \in \mathbb{C}^{NL}$, from (\ref{eqn:receivedSig4}), i.e., by flattening $\mathbf{Y}$, where $[\mathbf{Y}]_{n,l}$$=$$y_{n,l}$. The Fisher information, $\text{FIM}(\Theta)$, for a complex Gaussian signal, as given in \cite[Eq. (15.52)]{10.5555/151045}, is defined as 
\begin{equation}
\begin{split}
\label{eqn:receivedSig5} 
{ \left[ \text{FIM}(\Theta) \right] }_{i,j} =
2\text{Re} \biggl\{ \frac{\partial \mathbf{\mu_y}^H (\Theta)}{\partial \Theta_i} \mathbf{C}_y^{-1}(\Theta) \frac{\partial \mathbf{\mu_y} (\Theta)}{\partial \Theta_j}  \biggl\},
\end{split}
\end{equation}
\noindent
where $\mathbf{C_y} $$= $$\sigma_n^2\mathbf{I}$ is the covariance matrix of $\mathbf{y}$. $\mathbf{\mu_y} $$ =$$ \sqrt{\frac{\rho}{M} \beta} \alpha g [e^{-j2\pi f_\Delta \frac{d}{c}} e^{j4 \pi \frac{v}{\lambda_c} T_{s}},\dots,e^{-j2\pi Nf_\Delta \frac{d}{c}} e^{j4 \pi \frac{v}{\lambda_c} LT_{s}}]^T$ is the mean of $\mathbf{y}$. Note that $\alpha$ is constant within an OFDM block length $L$ (Swerling 1 model).
\begin{propi}
The lower bound on the variance of the distance estimates is
\begin{equation}
\begin{split}
\label{eqn:vard} 
var\{\hat{d}\} \geq \frac{3 T_s^2 c^2 (2L+1) }{ \pi^2 \text{SNR}_r NL ( 7NL-N-L-5 ) (N+1)},
 \end{split}
\end{equation}
\noindent
and the bound on the variance of the velocity estimates is
\begin{equation}
\begin{split}
\label{eqn:varv} 
var\{\hat{v}\} \geq \frac{3 \lambda_c^2 f_\Delta^2 (2N+1)}{4 \pi^2 \text{SNR}_r NL ( 7NL-N-L-5 )(L+1)},
 \end{split}
\end{equation}
\noindent
where $\text{SNR}_r$ is the SNR at BsB after attenuation from pathloss and $\sigma_\text{RCS}$.
\end{propi}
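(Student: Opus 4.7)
The plan is a direct application of the complex Gaussian FIM formula (\ref{eqn:receivedSig5}) with $\mathbf{C_y}=\sigma_n^2\mathbf{I}$, so that the inverse covariance is the scalar $1/\sigma_n^2$ and the $2{\times}2$ FIM becomes a sum over the $NL$ samples $(n,l)$. First I would differentiate each entry of $\mathbf{\mu_y}$ in (\ref{eqn:receivedSig4}):
\begin{equation}
\frac{\partial [\mathbf{\mu_y}]_{n,l}}{\partial d}=-\tfrac{j2\pi n f_\Delta}{c}\,[\mathbf{\mu_y}]_{n,l},\qquad \frac{\partial [\mathbf{\mu_y}]_{n,l}}{\partial v}=\tfrac{j4\pi l T_s}{\lambda_c}\,[\mathbf{\mu_y}]_{n,l}.
\end{equation}
Since $|[\mathbf{\mu_y}]_{n,l}|^2=(\rho\beta|\alpha|^2|g|^2)/M$ and $\alpha$ is frozen over the block (Swerling~1), substituting into (\ref{eqn:receivedSig5}) and using $|\alpha|^2{=}1$ and the definition $\text{SNR}_r=\rho\beta|g|^2/(\sigma_n^2 M)$ collapses the prefactor to $\text{SNR}_r$ for every entry.

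The three FIM entries then reduce to the separable sums
\begin{equation}
F_{dd}=2\,\text{SNR}_r\!\left(\tfrac{2\pi f_\Delta}{c}\right)^{\!2}\! L\sum_{n=1}^{N}n^2,\; F_{vv}=2\,\text{SNR}_r\!\left(\tfrac{4\pi T_s}{\lambda_c}\right)^{\!2}\! N\sum_{l=1}^{L}l^2,
\end{equation}
\begin{equation}
F_{dv}=-2\,\text{SNR}_r\,\tfrac{8\pi^2 f_\Delta T_s}{c\lambda_c}\!\left(\sum_{n=1}^{N}n\right)\!\!\left(\sum_{l=1}^{L}l\right).
\end{equation}
Plugging in $\sum n=N(N+1)/2$, $\sum n^2=N(N+1)(2N+1)/6$, and their $L$-analogues, I would then invert the matrix via $[\text{FIM}^{-1}]_{dd}=F_{vv}/\det$ and $[\text{FIM}^{-1}]_{vv}=F_{dd}/\det$.

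The main obstacle is the algebraic simplification of $\det=F_{dd}F_{vv}-F_{dv}^2$. After factoring out the common $\text{SNR}_r^2\cdot 16\pi^4 f_\Delta^2 T_s^2/(9c^2\lambda_c^2)\cdot N(N+1)L(L+1)$, the bracket to simplify is $NL\bigl[4(2N+1)(2L+1)-9(N+1)(L+1)\bigr]$, and a short expansion yields $NL(7NL-N-L-5)$, which is exactly the factor appearing in (\ref{eqn:vard})--(\ref{eqn:varv}). Substituting this determinant back, cancelling the common factors in $F_{vv}/\det$ (using $T_s=1/f_\Delta$ to convert $1/f_\Delta^2$ into $T_s^2$ for the range bound, and leaving $1/T_s^2=f_\Delta^2$ in the velocity bound) reproduces (\ref{eqn:vard}) and (\ref{eqn:varv}). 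The CRB inequalities $\mathrm{var}\{\hat d\}\geq [\text{FIM}^{-1}]_{dd}$ and $\mathrm{var}\{\hat v\}\geq [\text{FIM}^{-1}]_{vv}$ then complete the proof.
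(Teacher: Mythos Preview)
Your proposal is correct and follows essentially the same approach as the paper's proof: compute the three entries of the $2\times2$ FIM from (\ref{eqn:receivedSig5}) via the separable sums $\sum n$, $\sum n^2$, $\sum l$, $\sum l^2$, then invert to read off the diagonal CRB entries. Your explicit simplification of the bracket $4(2N{+}1)(2L{+}1)-9(N{+}1)(L{+}1)=7NL-N-L-5$ is in fact more detailed than what the paper shows, and your use of $T_sf_\Delta=1$ to convert between $T_s^2$ and $1/f_\Delta^2$ matches the paper's final form; the only cosmetic difference is that the paper writes $\mathbb{E}\{|\alpha|^2\}=1$ where you set $|\alpha|^2=1$, but the resulting FIM entries and bounds are identical.
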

\begin{proof}
See appendix A.
\end{proof}


\section{Simulation Results}
\begin{table}
    \caption{Simulation parameters.}
    \begin{center}
        \begin{tabular}{c|c|c|c   }
            \hline
            \textbf{Parameter}  & \textbf{Value} & \textbf{Parameter}  & \textbf{Value} \\
            \hline
            $M_\text{V} $$=$$ M_\text{H}$ & $10$ &      $\sigma_\text{RCS}$ & $-10$ dBsm \\
            \hline
            $f_\Delta$ & $60$ kHz & $\gamma_a$ & $6$ \\
            \hline
            $f_c$ & $15$ GHz & R & 45
            \\ 
            \hline
            
        \end{tabular}
        \label{tab1}
    \end{center}
    \label{table:simparams}
\end{table}
The parameters and values are provided in Table \ref{table:simparams}. Fig. \ref{fig:CRB_theo} shows the CRB for different configurations of $N$ and $L$. A larger $N$ lowers the CRB bound for both range and velocity, with a more significant impact on range. Similarly, a larger $L$ lowers the CRB for both range and velocity, with a greater impact on velocity.
\begin{figure}[h]
    \centering
    \subfloat[Range CRB.]{%
        \includegraphics[width=1.68in]{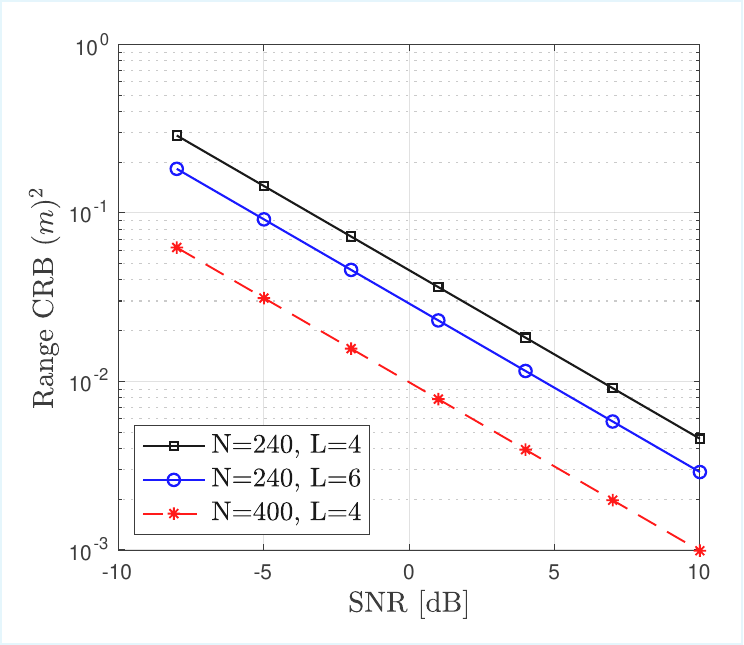}
    }
    \hfill
    \subfloat[Velocity CRB.]{%
        \includegraphics[width=1.68in]{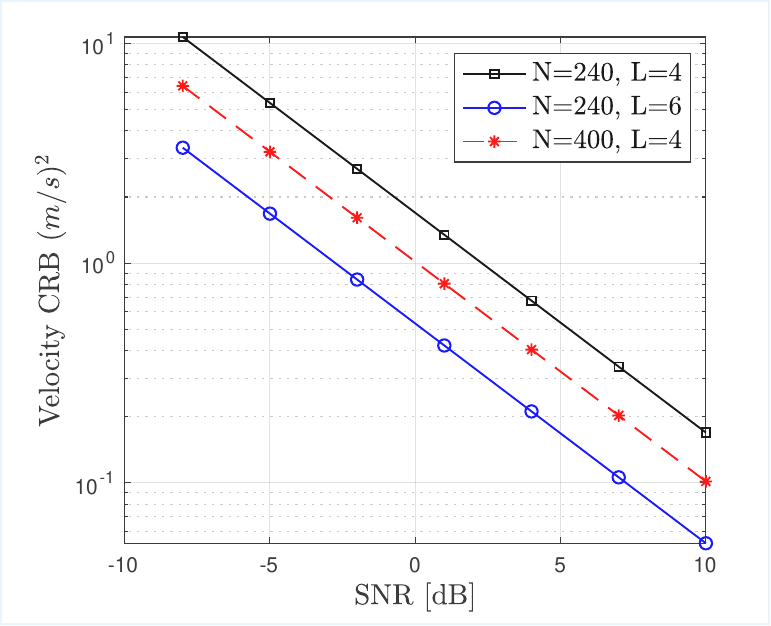}
    }
    \caption{Range and velocity CRB in different block sizes.}
    \label{fig:CRB_theo}
\end{figure}
In Fig. \ref{fig:range_veloc_sim_crb}, a target is generated with a random range, $d \leq d_u$, and random velocity, $\abs{v}\leq v_u$. Fig. \ref{fig:range_veloc_sim_crb}a shows CRB and range root mean square error (RMSE) of an $N'$-point IFFT at SNR$_r$$=$$ -10$ dB, where the SSB line represents the received signal nullified at some symbols in the block, as shown in Fig. \ref{fig:SSB_per}. Similarly, Fig. \ref{fig:range_veloc_sim_crb}b shows the CRB and velocity RMSE of $L'$-point FFT at SNR$_r$$=$$ -7$ dB. The results show that RMSE values for $N$$=$$240$, $ L$$=$$4$, and the {SSB signals} are approximately the same for both cases, range and velocity. The absence of some symbols in the {SSB signals} has minimal effect compared to the completed block.
\begin{figure}[h]
    \centering
    \subfloat[$N'$-point IFFT range RMSE \\at SNR$_r= -10$ dB.]{%
        \includegraphics[width=1.68in]{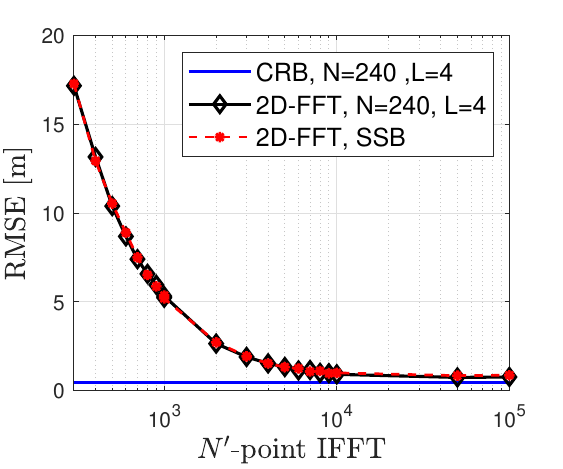}
    }
    \hfill
    \subfloat[$L'$-point FFT velocity RMSE at SNR$_r= -7$ dB.]{%
        \includegraphics[width=1.68in]{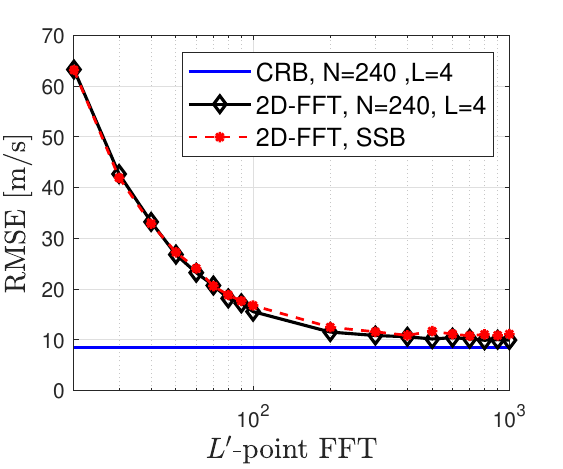}
    }
    \caption{Range and velocity RMSE.}
    \label{fig:range_veloc_sim_crb}
\end{figure} 

{Furthermore, the detection performance is investigated by uniformly and randomly deactivating different percentages of surveillance beams. To that, let us define $P_{fa}$$=$$ P(\max_r \text{PAF}>\gamma  |  \mathcal{H}_0)$ as the probability of false alarm and $P_d$$=$$ P(\max_r \text{PAF}>\gamma  |  \mathcal{H}_1)$ as the probability of detection. Thus, a $100$\% beam deactivation indicates that all surveillance beams are turned off. Fig. \ref{fig:music_deac} shows that with an appropriate thresholds i.e., $\gamma $$=$$ 4$, $P_d$ is maintained above $0.7$ up to $20$\% beam deactivation with $P_{fa}$$=$$0.03$.}
\begin{figure}[h]
    \centering
    {%
        \includegraphics[width=2.3in]{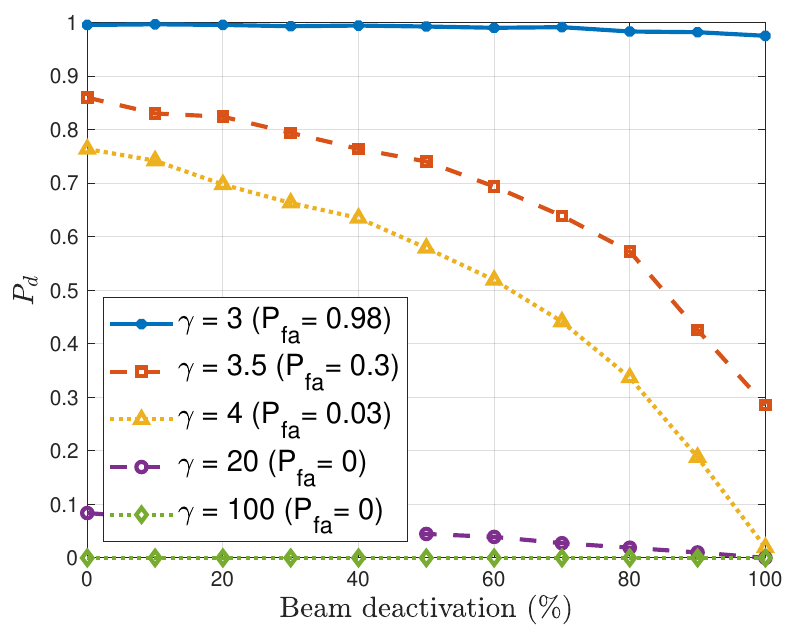}
    }
    \caption{$P_d$ during beam deactivation (\%).}
    \label{fig:music_deac}
\end{figure}
\section{Conclusions}
{This paper investigates 
the feasibility of sensing low-altitude drones by employing the sweeping beams of SSB signals of the mobile network. A set of SSB signals is used as illuminators in a bistatic sensing setup (surveillance beams), and a base station serves as the sensing node. For detection, the decision relies on the threshold and is evaluated with the highest peak-to-average factor (PAF) of the 2D-FFT across the sweeping beams, while localization relies on the aggregated 2D-FFT, assisted by the PAF. Simulation results for the proposed scenario demonstrate the potential for drone-like target detection and localization.}
\label{Conclusions}

\appendices
\section{Proof of proposition 1}
The Fisher information for each element is
\begin{align}
\begin{split}
\label{eqn:fim_dd} 
{ \left[ \text{FIM}(\Theta) \right] }_{d,d} &=2\text{Re} \biggl\{ 4 \pi^2 \frac{\rho \beta g^2}{M \sigma_n^{2}} \mathbb{E}\{\abs{\alpha}^2\}  \frac{f_\Delta^2}{c^2} L \sum_{n=1}^N n^2  \biggl\} \\
&= \frac{4}{3} \pi^2 \text{SNR}_r \frac{f_\Delta^2}{c^2}  NL (N+1)(2N+1),
\end{split}
\end{align}
\begin{equation}
\begin{split}
\label{eqn:fim_vv} 
{ \left[ \text{FIM}(\Theta) \right] }_{v,v} =2\text{Re} \biggl\{ 16 \pi^2  \frac{\rho \beta g^2}{M \sigma_n^{2}} \mathbb{E}\{\abs{\alpha}^2\}  \frac{T_s^2}{\lambda_c^2} N \sum_{l=1}^L l^2  \biggl\}
 \\
= \frac{16}{3} \pi^2 \text{SNR}_r \frac{T_s^2}{\lambda_c^2} N L(L+1)(2L+1),
\end{split}
\end{equation}
\begin{equation}
\begin{split}
\label{eqn:fim_dv} 
{ \left[ \text{FIM}(\Theta) \right] }_{d,v} =2\text{Re} \biggl\{ -8 \pi^2 \frac{\rho \beta g^2}{M \sigma_n^{2}} \mathbb{E}\{\abs{\alpha}^2\}  \frac{f_\Delta T_s}{c \lambda_c} \sum_{n=1}^N \sum_{l=1}^L nl  \biggl\}
 \\
= -4 \pi^2 \text{SNR}_r \frac{1}{c \lambda_c} NL(L+1)(N+1),
\end{split}
\end{equation}
\noindent
where $[\text{FIM}(\Theta) ]_{v,d}$$=$$[\text{FIM}(\Theta) ]_{d,v}$, $\mathbb{E}\{\abs{\alpha}^2\} $$=$$1$, $\text{SNR}_r$$=$${\rho \beta g^2}/({\sigma_n^2 M})$, $\sum_{n=1}^N n $$=$$ {N(N+1)}/{2}$ ,$\sum_{n=1}^N n^2 $$=$$ {N(N+1)(2N+1)}/{6}$ and $T_s f_\Delta $$=$$1$.
\noindent
The Fishier information matrix is
\begin{equation}
\begin{split}
\label{eqn:receivedSig6} 
{  \text{FIM}(\Theta) } = 
4 \pi^2 \text{SNR}_r NL
\\
\begin{bmatrix}
 \frac{1}{3}\frac{f_\Delta^2}{c^2}(N+1)(2N+1) &  -\frac{1}{c \lambda_c} (L+1)(N+1) \\
 - \frac{1}{c \lambda_c} (L+1)(N+1) & \frac{4}{3}\frac{T_s^2}{ \lambda_c^2} (L+1)(2L+1)
\end{bmatrix}.
\end{split}
\end{equation}
Finally, the $\text{CRB}(\Theta )$ is derived as
\begin{equation}
\begin{split}
\label{eqn:crb1} 
 {\text{CRB}(\Theta )}  =   {\text{FIM}^{-1}(\Theta)} = \frac{ 3}{ \pi^2 \text{SNR}_r NL ( 7NL-N-L-5 )}
 \\
\begin{bmatrix}
T_s^2 c^2 \frac{(2L+1)}{(N+1)} &  \frac{3}{4}\lambda_c c \\
 \frac{3}{4}\lambda_c c & \frac{1}{4} \lambda_c^2 f_\Delta^2 \frac{(2N+1)}{(L+1)}
\end{bmatrix}.
 \end{split}
\end{equation}
\noindent
\section*{Acknowledgment}
This work was supported in part by ELLIIT and in part by WASP-funded project ``ALERT''.





\bibliographystyle{IEEEtran}
\bibliography{ref}
	
\end{document}